\documentclass[journal, doublecolumn]{IEEEtran}
\usepackage{amssymb}
\usepackage{amsthm}
\usepackage{graphicx}
\usepackage{algorithm}
\usepackage{algorithmic}
\usepackage{cite}
\usepackage{bm}
\usepackage{float}
\usepackage{multirow}
\usepackage{color}
\usepackage{subfigure}
\usepackage{caption}
\usepackage{subcaption}
\usepackage{epstopdf}
\usepackage{hyperref}
\usepackage{fixltx2e}
\usepackage{longtable}
\usepackage{diagbox}
\usepackage{changepage}
\usepackage{amsmath, amsfonts}
\usepackage{dsfont}
\usepackage{makecell}
\usepackage{optidef}

\theoremstyle{definition}

\theoremstyle{remark}

\theoremstyle{plain}
\newtheorem{theorem}{Theorem}
\newtheorem{lemma}{Lemma}
\newtheorem{proposition}{Proposition}
\newtheorem{corollary}{Corollary}

\begin{document}

\title{Physical Layer Security Performance of Pinching-Antenna Systems With In-Waveguide Attenuation}

\author{Xiaochen~Zhang,
        Haitao~Du,
        Yanyu~Cheng,
        Yushen~Lin,
        and~Kah~Chan~Teh

\thanks{Xiaochen Zhang is with School of Continuing and Lifelong Education, National University of Singapore, Singapore 119077 (e-mail: e1553773@u.nus.edu).

Haitao Du is with the School of Cyberspace, Hangzhou Dianzi University, Hangzhou 310018, China (e-mail: 242270031@hdu.edu.cn).

Yanyu Cheng, Yushen Lin, and Kah Chan Teh are with the School of Electrical and Electronic Engineering, Nanyang Technological University, Singapore 639798 (e-mail: ycheng022@e.ntu.edu.sg; yushen.lin@ntu.edu.sg; ekcteh@ntu.edu.sg).
}}

\maketitle

\begin{abstract}
Pinching antenna (PA) systems have recently gained significant attention.
While their physical-layer security (PLS) is being explored, 
most studies rely on idealized lossless models, ignoring practical waveguide attenuation.
In this paper, we investigate the PLS performance of PA systems under a more realistic attenuation-incorporated waveguide model. 
Specifically, we investigate a PA system-based secure communication scenario consisting of a base station (BS), a legitimate user, and a passive eavesdropper.
We derive expressions for closed-form upper and lower bounds on both the secrecy outage probability (SOP) and ergodic secrecy capacity (ESC). 
The results indicate that the PA system outperforms conventional fixed-antenna systems.
\end{abstract}

\begin{IEEEkeywords}
In-waveguide attenuation, physical-layer security, pinching antenna systems.
\end{IEEEkeywords}

\IEEEpeerreviewmaketitle

\section{Introduction}

In recent years, flexible antenna systems have attracted significant attention due to their potential for wireless channel reconfiguration. 
Representative architectures include reconfigurable intelligent surfaces (RIS), movable antenna systems, and fluid antenna systems \cite{sun_2025_physical}. 
By exploiting reconfigurable degrees of freedom, these architectures can effectively shape the wireless channel. Specifically, RIS consists of low-cost reflecting elements that dynamically reshape the propagation channel and establish virtual line-of-sight (LoS) links \cite{ding_2024_flexibleantenna}. 
However, RIS-assisted communication often suffers from severe cascaded path loss \cite{ding_2024_flexibleantenna}. 
In contrast, movable and fluid antenna systems enhance signal quality by adjusting antenna positions to exploit favorable channel conditions. 
However, to avoid prohibitive training overhead, channel estimation is typically restricted to small-scale position adjustments\cite{zhu_2023_movable, wong_2021_fluid}. 
In this context, PA systems built on dielectric waveguides have emerged as a promising architecture, offering large spatial flexibility and waveguide-enabled reconfiguration for channel customization.

The PA system was first prototyped by NTT DOCOMO in 2021 \cite{_2021_pinching}, employing a dielectric waveguide to guide electromagnetic signals. 
By pinching dielectric particles onto the waveguide, radio signals can be radiated from arbitrary positions \cite{_2021_pinching}. 
Unlike conventional flexible architectures, the waveguide in PA systems can extend over long distances, allowing the radiating point to move much closer to the user. 
Consequently, PA  can establish stable LoS links and mitigate large-scale path loss \cite{liu_2025_pinchingantenna}.
Existing studies on PA systems have primarily focused on legitimate-link performance. 
For instance, \cite{dimitriostyrovolas_2025_performance} established a performance analysis framework considering free space path loss and in-waveguide attenuation, deriving closed-form expressions for the SOP and average rate.
Furthermore, recent work has conducted a comprehensive evaluation of PA systems under both orthogonal multiple access (OMA) and non-orthogonal multiple access (NOMA) schemes, demonstrating that PA systems significantly reduce the SOP for LoS users compared to conventional antenna systems \cite{cheng_2025_on}. 
Other works have explored enhancement directions, such as array gain characterization
and beamforming optimization \cite{ouyang_2025_array, 
wang_2025_modeling} to improve reliability and resource efficiency.

However, wireless propagation is inherently open, and improving legitimate link performance alone is insufficient to guarantee security, as potential eavesdroppers may also exploit the PA systems' design.
Therefore, analyzing PA systems from a PLS perspective is essential. 
Some works have investigated secrecy enhancement through cooperation among multiple PAs to maximize the secrecy rate \cite{wang_2026_pinchingantenna}. 
Meanwhile, \cite{badarnehosamahs_2025_physicallayer} evaluates secrecy performance in a wiretap channel scenario and derives analytical expressions for secrecy capacity and SOP.

\subsection{Motivation and Contributions}
Existing studies often assume idealized lossless waveguide models, 
ignoring the practical in-waveguide attenuation that can significantly affect the secrecy performance and optimal antenna placement \cite{xu_2025_pinchingantenna}. 
This gap motivates the present work to evaluate PA systems under realistic, attenuation-aware conditions.
The contributions can be summarized as follows.
\begin{itemize}
\item By exploiting statistical geometry to model the link distances using probability density functions (PDFs) and cumulative distribution functions (CDFs), we derive closed-form expressions for the legitimate and eavesdropping channels.
\item We derive the closed-form expressions for upper and lower bounds for the SOP and ESC, which facilitate further studies such as the optimization of PA deployment strategies.
\item To obtain the diversity order and high signal-to-noise ratio (SNR) slope, we further derive asymptotic expressions of the SOP and ESC in the high-SNR regime.
\item Numerical results demonstrate that the proposed PA systems achieve superior secrecy performance compared to conventional FA systems, despite accounting for in-waveguide attenuation.
\end{itemize}

\section{System Model}
We consider a PA communication system with a BS, a legitimate user $\mathrm{U}_b$, and an eavesdropper $\mathrm{U}_w$ in a room with side length $D$, as shown in Fig.~\ref{fig:system_model}.
To conveniently describe the system model, we establish a three-dimensional Cartesian coordinate system.
$\mathrm{U}_b$ and $\mathrm{U}_w$ are located at $\psi_b = [x_1, y_1, 0]$ and $\psi_w = [x_2, y_2, 0]$, respectively. 
A waveguide is placed at the height of $d$ and is parallel to the $x$-axis.
A PA is deployed on the waveguide to serve $\mathrm{U}_b$ with a LoS link.
Since Willie is in the same room with Bob, he also receives a LoS signal from the PA.

\begin{figure}
    \centering
\includegraphics[width=1\linewidth]{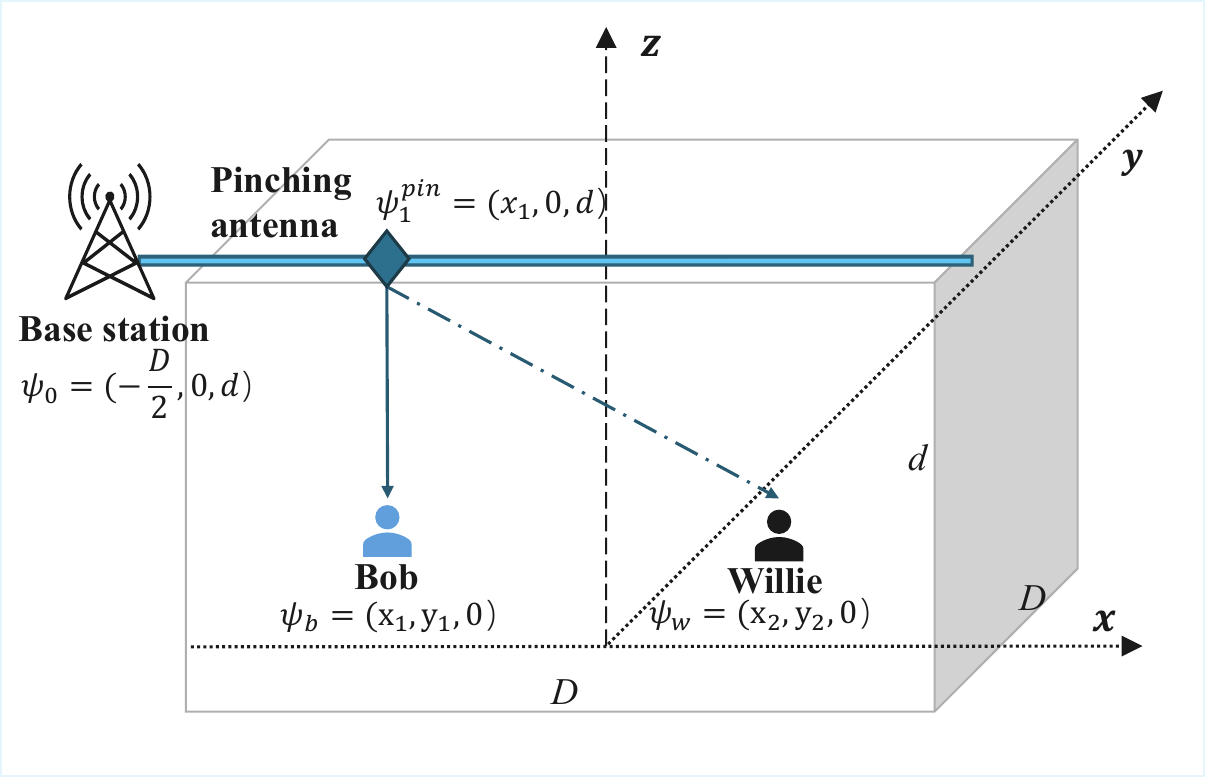}
    \caption{System model consisting of a single-user direct transmission scenario, where a BS serves a legitimate user (Bob) in the presence of an illegitimate user (Willie). }
    \label{fig:system_model}
\end{figure}

To maximize Bob's received signal power, the PA is placed at the position closest to Bob.
Therefore, the $\mathrm{U}_b$'s data rate is given by
\begin{equation}\label{eq-Rb_attenuation}
\begin{split}
R_{b}=\frac{1}{2}\log _2\left( 1+\frac{\eta P_1 e^{-2\alpha\left| \psi_{0} -  \psi _{1}^{Pin} \right|}}{\left| \psi _{1}^{Pin}-\psi _b \right|^2\sigma^2_b} \right),
\end{split}
\end{equation}
where $\eta = \frac{c^2}{16 \pi^2 f_c^2}$, $c$ denotes the speed of light, $f_c$ is the carrier frequency, $P_1$ is the transmit power for the signal of the BS, $\alpha$ is the attenuation constant, $\left|\psi_{0}-\psi_{1}^{Pin}\right|$ is the propagation distance along the waveguide, $\left|\psi_{1}^{Pin}-\psi_{b}\right|$ denotes the distance between the PA and Bob, and $\sigma_b$ and $\sigma_w$ denote the additive white Gaussian noise (AWGN) at $\mathrm{U}_b$ and $\mathrm{U}_w$, respectively.  
Moreover, the data rate of $\mathrm{U}_w$ is given by
\begin{equation}\label{eq-Rw_attenuation}
\begin{split}
R_{w}=\frac{1}{2}\log _2\left( 1+\frac{\eta P_1e^{-2\alpha\left| \psi_{0} -  \psi _{1}^{Pin} \right|}}{\left| \psi _{1}^{Pin}-\psi _w \right|^2\sigma^2_w} \right),
\end{split}
\end{equation}
Then, the secrecy capacity is given by
\begin{equation}\label{eq-r-s-corrected}
\begin{split}
R_{s} = R_{b}-R_{w} = \frac{1}{2}\log _2\left( \frac{1+\frac{\eta P_1 e^{-2\alpha|\psi_0 - \psi_1^{Pin}|}}{\left| \psi _{1}^{Pin}-\psi _b \right|^2\sigma_b ^2}}{1+\frac{\eta P_1 e^{-2\alpha|\psi_0 - \psi_1^{Pin}|}}{\left| \psi _{1}^{Pin}-\psi _w \right|^2\sigma_w ^2}} \right).
\end{split}
\end{equation}

\section{Performance Analysis for PA system}
\subsection{Distance Statistics}
By referring to \cite{cheng_2025_on}, the distance statistics for the proposed PA system are characterized in the following lemmas.
\begin{lemma}\label{lemma-distance-pin-bob}
Denote $Z_b = |\psi_1^{Pin} - \psi_b|^2$, where $\psi_1^{Pin} = \left[ x_1, 0, d \right]$, $\psi_b = \left[x_1, y_1, 0 \right]$, $x_1 \in \left[ -\frac{D}{2}, \frac{D}{2} \right]$ and $y_1 \in \left[ -\frac{D}{2}, \frac{D}{2} \right]$. 
Its PDF and CDF are given by
\begin{equation}\label{lemma-1-pdf}
\begin{split}
f_{Z_b}(z) = \left\{\begin{matrix}
0, & z<d^2, \\
\frac{1}{D\sqrt{z-d^2}}, & d^2\le z\le d^2+\frac{D^2}{4}, \\
0, & d^2+\frac{D^2}{4}\le z,
\end{matrix}\right.
\end{split}
\end{equation}
and
\begin{equation}\label{lemma-1-cdf}
\begin{split}
F_{Z_b}(z) = \left\{\begin{matrix}
0, & z<d^2, \\
\dfrac{2}{D}\sqrt{z-d^2}, & d^2 \leq z \leq d^2 + \tfrac{D^2}{4}, \\
1, & z \geq d^2 + \tfrac{D^2}{4}.
\end{matrix}\right.
\end{split}
\end{equation}
respectively.
\end{lemma}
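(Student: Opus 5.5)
Since $\psi_1^{Pin}=[x_1,0,d]$ and $\psi_b=[x_1,y_1,0]$ share the same $x$-coordinate, the squared distance reduces to $Z_b=y_1^2+d^2$. The randomness therefore comes only from $y_1$, which I take to be uniformly distributed on $[-\tfrac{D}{2},\tfrac{D}{2}]$, as implied by Bob being uniformly placed in the square room. The plan is to derive the CDF of $Z_b$ directly from this uniform law and then differentiate to obtain the PDF.

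First, I would fix the support. Because $0\le y_1^2\le D^2/4$, $Z_b$ lies in $[d^2,\,d^2+D^2/4]$. This gives $F_{Z_b}(z)=0$ for $z<d^2$ and $F_{Z_b}(z)=1$ for $z\ge d^2+D^2/4$, and correspondingly a zero PDF outside the interval.

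Second, for $d^2\le z\le d^2+D^2/4$, I would compute
\begin{equation*}
F_{Z_b}(z)=\Pr\left(y_1^2\le z-d^2\right)=\Pr\left(|y_1|\le\sqrt{z-d^2}\right)=\frac{2\sqrt{z-d^2}}{D}.
\end{equation*}
The last equality uses the fact that the symmetric interval $[-\sqrt{z-d^2},\sqrt{z-d^2}]$ has length $2\sqrt{z-d^2}$ and lies inside $[-D/2,D/2]$ exactly when $z\le d^2+D^2/4$. Differentiating in $z$ then gives $f_{Z_b}(z)=\frac{1}{D\sqrt{z-d^2}}$ on the interval.

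Finally, I would run two sanity checks. The CDF is continuous at both endpoints: it equals $0$ at $z=d^2$ and $1$ at $z=d^2+D^2/4$. The PDF singularity at $z=d^2$ is integrable, so $f_{Z_b}$ integrates to one. There is no real obstacle in this argument; the only step needing care is justifying the uniform model for $y_1$ and noting that $x_1$ drops out entirely because the PA is placed directly above Bob's $x$-coordinate.
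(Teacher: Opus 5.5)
Your proof is correct: with $Z_b=y_1^2+d^2$ and $y_1$ uniform on $[-D/2,D/2]$, computing $\Pr(|y_1|\le\sqrt{z-d^2})=\frac{2}{D}\sqrt{z-d^2}$ and then differentiating is the natural derivation of both formulas. The paper gives no proof of this lemma and simply refers to the earlier work of Cheng et al. for the distance statistics, so your argument fills in that step, and you rightly state the uniform-placement assumption that the lemma leaves implicit.
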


\begin{lemma}\label{lemma-distance-pin-willie}
Denote $Z_w = |\psi_1^{Pin} - \psi_w|^2$, where $\psi_w = \left[x_2, y_2, 0 \right]$, $x_2 \in \left[ -\frac{D}{2}, \frac{D}{2} \right]$ and $y_2 \in \left[ -\frac{D}{2}, \frac{D}{2} \right]$. 
Its PDF and CDF are given by
\begin{equation}\label{lemma-2-pdf}
\begin{split}
f_{Z_w}(z) = \left\{\begin{matrix}
 0, & z<d^2, \\
 \frac{\pi}{D^2}-\frac{2\sqrt{\zeta}}{D^3}, & d^2\le z\le d^2+\frac{D^2}{4}, \\
 \xi ,& d^2+\frac{D^2}{4}\le z\le d^2 + D^2, \\
 \epsilon,  & d^2 + D^2 \le z\le d^2 + \frac{5D^2}{4},\\
 0, & z\ge d^2 + \frac{5D^2}{4},
\end{matrix}\right.
\end{split}
\end{equation}
and
\begin{equation}\label{lemma-2-cdf}
\begin{split}
F_{Z_w}(z) = \left\{\begin{matrix}
 0, & z<d^2, \\
 \frac{\pi \zeta}{D^2}-\frac{4\zeta ^{\frac{3}{2}}}{3D^3}, & d^2\le z<d^2+\frac{D^2}{4}, \\
 \delta, & d^2+\frac{D^2}{4}\le z\le d^2 + D^2, \\
 \theta, & d^2 + D^2 \le z\le d^2 + \frac{5D^2}{4},\\
 1, & z\ge d^2 + \frac{5D^2}{4}, 
\end{matrix}\right.
\end{split}
\end{equation}
respectively, where $\zeta = z -d^2$,
$\xi = \frac{2}{D^2}\arcsin\left(\frac{D}{2\sqrt{\zeta}}\right) - \frac{1}{D^2}$,
$\epsilon = \frac{2}{D^2}(\arcsin\left(\frac{D}{2\sqrt{\zeta}}\right)-\arcsin\left(\sqrt{1-\frac{D^2}{\zeta}}\right))- \frac{1}{D^2}+\frac{2}{D^3}\sqrt{\zeta-D^2}$,
$\theta = \frac{1}{D}\sqrt{\zeta-\frac{D^2}{4}}+ \frac{2\zeta}{D^2}(\arctan\left(\frac{D}{2\sqrt{\zeta-\frac{D^2}{4}}}\right)- \arctan\left(\frac{\sqrt{\zeta-D^2}}{D}\right))-\frac{\zeta}{D^2}+\frac{1}{12}+\frac{2}{3D^3}\sqrt{\zeta- D^2}(2\zeta+D^2)$,
and 
$\delta = \frac{1}{D}\sqrt{\zeta -\frac{D^2}{4}}+ \frac{2\zeta}{D^2}\arcsin\left(\frac{D}{2\sqrt{\zeta}}\right)- \frac{\zeta}{D^2}+ \frac{1}{12}$.
\end{lemma}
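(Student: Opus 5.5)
Since $\psi_1^{Pin}=[x_1,0,d]$ and $\psi_w=[x_2,y_2,0]$, we have $Z_w = d^2 + (x_2-x_1)^2 + y_2^2$. With $x_1,x_2,y_2$ independent and uniform on $[-\frac{D}{2},\frac{D}{2}]$, we set $U = x_2-x_1$ and $V=y_2$, so that $\zeta = Z_w - d^2 = U^2+V^2$. Then $U$ has the triangular density $f_U(u)=\frac{D-|u|}{D^2}$ on $[-D,D]$, and $V$ is uniform with density $\frac1D$ on $[-\frac D2,\frac D2]$. By symmetry we may restrict to $u\ge0$, $v\ge0$. This gives
\begin{equation*}
F_{Z_w}(z)=\Pr\{U^2+V^2\le \zeta\}=\frac{4}{D^3}\iint_{\substack{u^2+v^2\le\zeta\\ 0\le u\le D,\ 0\le v\le D/2}}(D-u)\,\mathrm{d}u\,\mathrm{d}v .
\end{equation*}
The plan is to evaluate this integral of the linear weight $D-u$ over the intersection of a quarter disk of radius $\sqrt\zeta$ with the rectangle $[0,D]\times[0,\frac D2]$. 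The PDF is then obtained by differentiating in $z$, which amounts to integrating the weight along the arc $u^2+v^2=\zeta$ inside the rectangle.

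The geometry splits into three regimes, and these give the three nontrivial pieces of the statement.

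\textbf{Regime (i):} $\zeta\le \frac{D^2}{4}$. The quarter disk lies entirely in the rectangle. Integrating $D$ over it gives $\frac{D\pi\zeta}{4}$, and integrating $u$ over it gives $\frac{\zeta^{3/2}}{3}$. Multiplying by $\frac{4}{D^3}$ yields $\frac{\pi\zeta}{D^2}-\frac{4\zeta^{3/2}}{3D^3}$, and differentiating gives $\frac{\pi}{D^2}-\frac{2\sqrt\zeta}{D^3}$.

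\textbf{Regime (ii):} $\frac{D^2}{4}\le\zeta\le D^2$. The disk is truncated by the line $v=\frac D2$. I will use the PDF route: in polar coordinates the density is $f=\frac{2}{D^3}\int_0^{\phi_0}(D-\sqrt\zeta\cos\phi)\,\mathrm{d}\phi$, where $\phi_0=\arcsin\!\big(\frac{D}{2\sqrt\zeta}\big)$ is the angle measured from the $u$-axis. The factor $\frac{4}{D^3}\cdot\frac12$ comes from $\mathrm{d}\zeta = 2r\,\mathrm{d}r$, with $r\,\mathrm{d}\phi$ absorbing the $r$. Evaluating gives $\frac{2}{D^2}\phi_0-\frac{2\sqrt\zeta}{D^3}\sin\phi_0=\frac{2}{D^2}\arcsin\!\big(\frac{D}{2\sqrt\zeta}\big)-\frac1{D^2}$, which matches $\xi$. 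The CDF $\delta$ is obtained either by integrating $\xi$ from $\frac{D^2}{4}$ using continuity, $F=\frac{\pi}{4}-\frac16$ at $\zeta=\frac{D^2}{4}$, or directly. For the direct computation, split the region into the rectangle strip $u\le\sqrt{\zeta-D^2/4}$ with full height $\frac D2$, plus the circular-cap part. An integration by parts on $\zeta\arcsin\!\big(\frac{D}{2\sqrt\zeta}\big)$ should reproduce the $\frac1D\sqrt{\zeta-D^2/4}$ and $\frac1{12}$ terms.

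\textbf{Regime (iii):} $D^2\le\zeta\le\frac{5D^2}{4}$. The arc is additionally cut by $u=D$, so the angle range becomes $\phi\in[\phi_1,\phi_0]$ with $\phi_1=\arccos\!\big(\frac{D}{\sqrt\zeta}\big)=\arcsin\!\big(\sqrt{1-D^2/\zeta}\big)$. The same polar integral gives $\frac{2}{D^2}(\phi_0-\phi_1)-\frac{2\sqrt\zeta}{D^3}(\sin\phi_0-\sin\phi_1)$, which reduces to $\epsilon$ since $\sqrt\zeta\sin\phi_1=\sqrt{\zeta-D^2}$. The CDF $\theta$ follows by integrating $\epsilon$ from $D^2$ with the value $\delta(D^2)$ as the constant. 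The identity $\arcsin\!\big(\frac{D}{2\sqrt\zeta}\big)=\arctan\!\big(\frac{D}{2\sqrt{\zeta-D^2/4}}\big)$ recasts this in the stated arctan form. Outside these ranges the CDF is $0$ or $1$. As a check, $\theta\big(\frac{5D^2}{4}\big)=1$.

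The main obstacle is the closed-form antiderivatives for $\delta$ and especially $\theta$. These require integrating terms like $\arcsin(D/(2\sqrt\zeta))$ and $\arccos(D/\sqrt\zeta)$ in $\zeta$, by parts, and then verifying that the constants match continuity at $\frac{D^2}{4}$ and $D^2$. I would do these integrations by parts first and then confirm the boundary values, namely $F(d^2+D^2/4)=\frac\pi4-\frac16$, continuity at $d^2+D^2$, and the terminal value $1$. This can be done instead of relying on the reference \cite{cheng_2025_on}, from which these results are taken.
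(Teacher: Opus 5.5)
Your proposal is correct, and it does more than the paper. The paper gives no derivation of this lemma; it simply imports the PDF and CDF ``by referring to'' \cite{cheng_2025_on}.

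You instead build it from first principles. You reduce $Z_w-d^2$ to $U^2+V^2$, with $U=x_2-x_1$ triangular on $[-D,D]$ and $V=y_2$ uniform. You then get the density as the weighted arc integral $\frac{2}{D^3}\int(D-\sqrt{\zeta}\cos\phi)\,\mathrm{d}\phi$ over the angular window cut out by $v\le \frac{D}{2}$ and $u\le D$. Finally, you recover the CDF by integrating with continuity constants.

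The antiderivatives you flagged as the main obstacle do close as you expect:
$\int\arcsin(\frac{D}{2\sqrt{\zeta}})\,\mathrm{d}\zeta=\zeta\arcsin(\frac{D}{2\sqrt{\zeta}})+\frac{D}{2}\sqrt{\zeta-\frac{D^2}{4}}$
gives $\delta$, with the constant $\frac{1}{12}$ forced by $F=\frac{\pi}{4}-\frac{1}{6}$ at $\zeta=\frac{D^2}{4}$.
With your $\phi_1=\arccos(\frac{D}{\sqrt{\zeta}})$, one has $\int_{D^2}^{\zeta}\arccos(\frac{D}{\sqrt{s}})\,\mathrm{d}s=\zeta\phi_1-D\sqrt{\zeta-D^2}$. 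So the regime-(iii) correction to $\delta$ is
$-\frac{2\zeta}{D^2}\phi_1+\frac{2}{3D^3}\sqrt{\zeta-D^2}\,(2\zeta+D^2)$.
This vanishes at $\zeta=D^2$, so continuity is automatic, and it yields $\theta(\frac{5D^2}{4})=1-\frac{5}{4}+\frac{1}{12}+\frac{7}{6}=1$.

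The two routes buy different things. The citation keeps the letter short, but it leaves the reader to trust that the imported formulas fit this exact geometry (PA at $[x_1,0,d]$ above Bob's abscissa, Willie uniform on the same square). Your route makes the lemma self-contained. It also explains the breakpoints $\frac{D^2}{4}$, $D^2$, $\frac{5D^2}{4}$ geometrically: the arc meets $v=\frac{D}{2}$, then $u=D$, then exits through the corner $(D,\frac{D}{2})$. And it comes with built-in consistency checks at each breakpoint.
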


\subsection{Secrecy Outage Probability}
The SOP of the system is given by
\begin{equation}\label{eq-op-bob}
\begin{split}
\mathbb{P} = \Pr\left( R_s < \bar{R} \right),
\end{split}
\end{equation}
where $\bar{R}  =0.01 \text{ bps/Hz}$ denotes the target secrecy rate.
Based on the distance statistics in Lemma~\ref{lemma-distance-pin-bob} and Lemma~\ref{lemma-distance-pin-willie}, the SOP of Bob is derived in the following theorem.
\begin{theorem}\label{SOP Boundary theorem} 
The SOP of Bob is given by
\begin{equation} \label{eq SOP upper and lower}
\begin{split}
\mathbb{P} \begin{cases}\le 1- (\mathcal{J}_1 +  \mathcal{K}_1 + \mathcal{L}_1),\\
\ge 1- (\mathcal{J}_2 +  \mathcal{K}_2 + \mathcal{L}_2),\\
\end{cases}
\end{split}
\end{equation}
where $\mathcal{J}_{1,2}=\sum_{i=1}^n{\omega_i}j_{1,2} (t_i)$,$\mathcal{K}_{1,2}=\sum_{i=1}^n{\omega_i}k_{1,2} (t_i)$ and $\mathcal{L}_{1,2}=\sum_{i=1}^n{\omega_i}l_{1,2} (t_i)$, $n$ is the number of nodes for the Chebyshev-Gauss quadrature, $\omega _i=\frac{\pi}{n}$, and $t_i=\cos \left( \frac{\left( 2i-1 \right) \pi}{2n} \right)$. 
The terms $j_{1,2}(t)$, $k_{1,2}(t)$, and $l_{1,2}(t)$ for the upper and lower bounds are unified by introducing the coefficient pairs $(A_{1,2}, B_{1,2})$. Specifically, we define $(A_1, B_1) = (e^{-2\alpha D}, 1)$ for the upper bound and $(A_2, B_2) = (1, e^{-2\alpha D})$ for the lower bound.
Accordingly, the terms are given as:
$j_{1,2}(t)= F_{Z_{b}}(\frac{\eta \rho A_{1,2}}{4^{\bar{R}} + \frac{4^{\bar{R}} \eta \rho B_{1,2}}{\frac{D^2}{8} t + \frac{D^2}{8} + d^2}-1})(\frac{\pi}{D^2}-\frac{2\sqrt{\frac{D^2}{8} t + \frac{D^2}{8}}}{D^3})\frac{D^2}{8} \sqrt{1-t^2}$,
$k_{1,2}(t)= F_{Z_{b}}(\frac{\eta \rho A_{1,2}}{4^{\bar{R}} + \frac{4^{\bar{R}} \eta \rho B_{1,2}}{\frac{3D^2}{8}t+\frac{5D^2}{8}+d^2}-1})(\frac{2}{D^2}\arcsin\left(\frac{D}{2\sqrt{\frac{3D^2}{8}t+\frac{5D^2}{8}}}\right) - \frac{1}{D^2})\frac{3D^2}{8} \sqrt{1-t^2}$, and
$l_{1,2}(t)= \frac{D^2}{8}\sqrt{1-t^2}F_{Z_{b}}(\frac{\eta \rho A_{1,2}}{4^{\bar{R}} + \frac{4^{\bar{R}} \eta \rho B_{1,2}}{\frac{D^2}{8}t + \frac{9D^2}{8} + d^2}-1})(\frac{2}{D^2}(\arcsin\left(\frac{2}{\sqrt{t+9}}\right)- \arcsin\left( \frac{t+1}{t+9} \right))- \frac{1}{D^2}+\frac{1}{D^2}\sqrt{\frac{t+1}{2}})$.
\end{theorem}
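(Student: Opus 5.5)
We would start by rewriting the outage event so that only $Z_b$ and $Z_w$ appear, with $\rho = P_1/\sigma^2$ (taking $\sigma_b=\sigma_w=\sigma$) and $\Gamma = e^{-2\alpha|\psi_0-\psi_1^{Pin}|}$. By \eqref{eq-r-s-corrected}, $R_s<\bar R$ is equivalent to
\begin{equation*}
1+\frac{\eta\rho\Gamma}{Z_b} < 4^{\bar R}\Big(1+\frac{\eta\rho\Gamma}{Z_w}\Big).
\end{equation*}
The only coupling term is $\Gamma$: the waveguide length $|\psi_0-\psi_1^{Pin}|$ depends on Bob's $x$-coordinate, so $\Gamma$ is not independent of the geometry. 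To remove it we use the deterministic bounds $e^{-2\alpha D}\le \Gamma\le 1$, since the feed-to-PA distance lies in $[0,D]$. Replacing $\Gamma$ in the Bob term by $A$ and in the Willie term by $B$ gives an event that contains, or is contained in, the true outage event. With $(A_1,B_1)=(e^{-2\alpha D},1)$, Bob's SNR is made smallest and Willie's largest, so the outage event grows. This yields the upper bound on $\mathbb P$. With $(A_2,B_2)=(1,e^{-2\alpha D})$ the event shrinks, which yields the lower bound.

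Next we write $1-\mathbb P$ for each modified event, that is, the success probability $\Pr\{\cdot\}$ that the bounded secrecy rate is at least $\bar R$. Solving the inequality for $Z_b$ gives
\begin{equation*}
Z_b \le \frac{\eta\rho A}{4^{\bar R}+\frac{4^{\bar R}\eta\rho B}{Z_w}-1}.
\end{equation*}
We then treat $Z_b$ and $Z_w$ as independent, as Lemmas~\ref{lemma-distance-pin-bob} and \ref{lemma-distance-pin-willie} do: they depend on $y_1$ and on $(x_2-x_1,y_2)$ respectively. Conditioning on $Z_w$ gives
\begin{equation*}
1-\mathbb P_{\rm bound}=\int F_{Z_b}\Big(\frac{\eta\rho A}{4^{\bar R}+4^{\bar R}\eta\rho B/z-1}\Big) f_{Z_w}(z)\,dz.
\end{equation*}
The integral is split over the three nonzero pieces of $f_{Z_w}$ in Lemma~\ref{lemma-distance-pin-willie}, namely $\zeta=z-d^2$ on $[0,D^2/4]$, $[D^2/4,D^2]$ and $[D^2,5D^2/4]$.

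Each piece is mapped to $t\in[-1,1]$ by an affine change of variables:
\begin{itemize}
\item $\zeta=\frac{D^2}{8}(t+1)$ on the first piece, with Jacobian $D^2/8$;
\item $\zeta=\frac{3D^2}{8}t+\frac{5D^2}{8}$ on the second, with Jacobian $3D^2/8$;
\item $\zeta=\frac{D^2}{8}t+\frac{9D^2}{8}$ on the third, with Jacobian $D^2/8$.
\end{itemize}
Writing $\int_{-1}^1 g(t)\,dt=\int_{-1}^1 \frac{g(t)\sqrt{1-t^2}}{\sqrt{1-t^2}}\,dt$ and applying Gauss–Chebyshev quadrature with nodes $t_i$ and weights $\pi/n$ produces $\mathcal J,\mathcal K,\mathcal L$.

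On the third piece, $\epsilon$ must be simplified in $t$. With $\zeta=\frac{D^2}{8}(t+9)$ we get $\frac{D}{2\sqrt\zeta}=\frac{\sqrt2}{\sqrt{t+9}}$, $\sqrt{1-D^2/\zeta}=\sqrt{\frac{t+1}{t+9}}$ and $\frac{2}{D^3}\sqrt{\zeta-D^2}=\frac{1}{D^2}\sqrt{\frac{t+1}{2}}$. These should be compared term by term with the printed $l(t)$ and corrected if the arcsin arguments differ.

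\textbf{Main obstacle.} The delicate step is justifying the sandwich. First, the monotonicity direction must be checked. Second, the independence of $Z_b$ and $Z_w$ must be justified given that $\Gamma$ was removed. Third, the argument of $F_{Z_b}$ must be handled where its denominator is nonpositive or the argument falls outside the support; clipping $F_{Z_b}$ to $0$ or $1$ resolves this.

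Beyond that, the remaining work is exact algebra. The only other approximation is the quadrature, which holds exactly only as $n\to\infty$, so the stated inequalities should be understood up to that error.
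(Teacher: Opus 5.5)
Your proposal is correct and follows essentially the same route as the paper's proof. You bound the attenuation factor by $e^{-2\alpha D}\le\Gamma\le 1$ to get $R_s^{\mathrm{lower}}=R_b^{\mathrm{lower}}-R_w^{\mathrm{upper}}$ (and the reverse), solve for $Z_b$, integrate $F_{Z_b}$ against the three pieces of $f_{Z_w}$ under the same affine maps, and apply Chebyshev--Gauss quadrature, so the bounds hold only up to quadrature error, as the paper's own $\approx$ indicates. Your $t$-form of $\epsilon$ is the correct one, and it shows that the printed $l_{1,2}(t)$ contains typos: the arcsin arguments should be $\sqrt{2/(t+9)}$ and $\sqrt{(t+1)/(t+9)}$, not $2/\sqrt{t+9}$ and $(t+1)/(t+9)$.
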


\begin{proof}
Based on $0 \le \left| \psi_{0} -  \psi _{1}^{\mathrm{Pin}} \right| \le D$,
the attenuation factor $e^{-2 \alpha \left| \psi_{0} -  \psi _{1}^{\mathrm{Pin}} \right|}$
is upper bounded by $1$ and lower bounded by $e^{-2 \alpha D}$.

Thus, the upper and lower bounds of $R_b$ are given by
\begin{equation}\label{eq-bob-upper}
R_{b}^{\mathrm{upper}}
=\frac{1}{2}\log _2\left( 1+\frac{\eta P_1}{\left| \psi _{1}^{\mathrm{Pin}}-\psi _b \right|^2\sigma_b^2} \right),
\end{equation}
and
\begin{equation}\label{eq-bob-lower}
R_{b}^{\mathrm{lower}}
=\frac{1}{2}\log _2\left( 1+\frac{\eta P_1 e^{-2 \alpha D}}{\left| \psi _{1}^{\mathrm{Pin}}-\psi _b \right|^2\sigma_b^2} \right),
\end{equation}
respectively. Similarly, the upper and lower bounds of $R_w$ are given by
\begin{equation}\label{eq-willie-upper}
R_{w}^{\mathrm{upper}}
=\frac{1}{2}\log _2\left( 1+\frac{\eta P_1}{\left| \psi _{1}^{\mathrm{Pin}}-\psi _w \right|^2\sigma_w^2} \right),
\end{equation}
and
\begin{equation}\label{eq-willie-lower}
R_{w}^{\mathrm{lower}}
=\frac{1}{2}\log _2\left( 1+\frac{\eta P_1 e^{-2 \alpha D}}{\left| \psi _{1}^{\mathrm{Pin}}-\psi _w \right|^2\sigma_w^2} \right),
\end{equation}
respectively. 
Hence,
$R_{s}^{\mathrm{upper}} = R_{b}^{\mathrm{upper}}-R_{w}^{\mathrm{lower}}$
and
$R_{s}^{\mathrm{lower}} = R_{b}^{\mathrm{lower}}-R_{w}^{\mathrm{upper}}$.

To derive the upper bound on the SOP, we consider the lower bound on the secrecy rate $R_{s}^{\mathrm{lower}} = R_{b}^{\mathrm{lower}} - R_{w}^{\mathrm{upper}}$.
By directly substituting ~\eqref{eq-Rb_attenuation} and ~\eqref{eq-Rw_attenuation}, we obtain the necessary condition for successful transmission: $Z_b \le \frac{\eta \rho e^{-2\alpha D}}{4^{\bar{R}} \left( 1 + \frac{\eta \rho}{Z_w} \right) - 1}$. 
Based on ~\eqref{lemma-1-cdf} and ~\eqref{lemma-2-pdf}, Chebyshev–Gauss quadrature can be applied for each integration. We change the integration domain into $[-1,1]$ and let $z =\frac{D^2}8t + \frac{D^2}8 + d^2$, then we have
\begin{equation}
\begin{aligned}
\mathcal{J}_{1} &= \int_{-1}^{1} \frac{D^2}{8}  F_{Z_{b}}( \frac{\eta \rho e^{-2\alpha D}}{4^{\bar{R}} + \frac{4^{\bar{R}} \eta \rho }{\frac{D^2}{8} t + \frac{D^2}{8} + d^2}-1} ) \\&\times  \left( \frac{\pi}{D^2}-\frac{2\sqrt{\frac{D^2}{8} t + \frac{D^2}{8}}}{D^3} \right) dt \approx \sum_{i=1}^{n}\omega_i j_1(t_i),
\end{aligned}
\end{equation}
Similarly with $\mathcal{J}_{2}$, $\mathcal{K}_{1,2}$ and $\mathcal{L}_{1,2}$, the proof has been completed.
\end{proof}
\begin{proposition}\label{proposition-high-op}
In the high-SNR regime, $\mathbb{P}$ can be approximated as
\begin{equation}\begin{split}
\mathbb{P} \begin{cases}\le 1- \left(\mathcal{J}_1^{\infty} +  \mathcal{K}_{1}^{\infty} + \mathcal{L}_{1}^{\infty}\right),\\
\ge 1- \left(\mathcal{J}_{2}^{\infty} +  \mathcal{K}_{2}^{\infty} + \mathcal{L}_{2}^{\infty}\right),
\end{cases}\end{split}\end{equation}
\end{proposition}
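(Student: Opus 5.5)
We start from the bounds of Theorem~\ref{SOP Boundary theorem} and take the limit $\rho=P_1/\sigma^2\to\infty$ inside each integrand $j_{1,2}(t)$, $k_{1,2}(t)$, $l_{1,2}(t)$. Only the argument of $F_{Z_b}(\cdot)$ depends on $\rho$. The weighting factors (the PDF pieces of $Z_w$ from Lemma~\ref{lemma-distance-pin-willie} and the Jacobians $\frac{D^2}{8}\sqrt{1-t^2}$, $\frac{3D^2}{8}\sqrt{1-t^2}$) do not depend on $\rho$. So the whole task reduces to finding the high-SNR form of
\begin{equation*}
g_{1,2}(z_w)=\frac{\eta\rho A_{1,2}}{4^{\bar R}+\frac{4^{\bar R}\eta\rho B_{1,2}}{z_w}-1},
\end{equation*}
where $z_w$ is the affine image of $t$ on each of the three subintervals.

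\textbf{Step 1: the threshold.} Divide the numerator and denominator of $g_{1,2}$ by $\rho$. This gives
\begin{equation*}
g_{1,2}(z_w)=\frac{\eta A_{1,2}}{\frac{4^{\bar R}-1}{\rho}+\frac{4^{\bar R}\eta B_{1,2}}{z_w}}\xrightarrow{\rho\to\infty}\frac{A_{1,2}\,z_w}{4^{\bar R}B_{1,2}}.
\end{equation*}
The limit is finite and independent of $\rho$. This reflects the fact that at high SNR the secrecy rate tends to $\frac12\log_2\!\big(\frac{A Z_w}{B Z_b}\big)$, so the outage event becomes $Z_b\ge \frac{A_{1,2}}{4^{\bar R}B_{1,2}}Z_w$.

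\textbf{Step 2: define the asymptotic terms.} Set $j^{\infty}_{1,2}(t)$, $k^{\infty}_{1,2}(t)$, $l^{\infty}_{1,2}(t)$ equal to the integrands of Theorem~\ref{SOP Boundary theorem} with $F_{Z_b}(g_{1,2}(z_w))$ replaced by $F_{Z_b}\!\big(\frac{A_{1,2}z_w}{4^{\bar R}B_{1,2}}\big)$. Then define $\mathcal{J}^{\infty}_{1,2}=\sum_i\omega_i j^{\infty}_{1,2}(t_i)$, and similarly $\mathcal{K}^{\infty}_{1,2}$ and $\mathcal{L}^{\infty}_{1,2}$.

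\textbf{Step 3: justify the limit.} Three facts are needed:
\begin{itemize}
\item $g_{1,2}$ is monotonically increasing in $\rho$ for fixed $z_w$.
\item $F_{Z_b}$ from Lemma~\ref{lemma-distance-pin-bob} is continuous and bounded by $1$.
\item The $Z_w$ densities are integrable on compact intervals.
\end{itemize}
Given these, dominated (or monotone) convergence lets the limit pass through each integral. Because the Chebyshev–Gauss sums use finitely many nodes, pointwise convergence at the nodes is enough for the quadrature form. The inequality directions of Theorem~\ref{SOP Boundary theorem} are preserved in the limit. Since $g$ increases toward its limit, $1-(\mathcal{J}+\mathcal{K}+\mathcal{L})$ decreases toward $1-(\mathcal{J}^\infty+\mathcal{K}^\infty+\mathcal{L}^\infty)$, so the limiting expressions remain bounds as $\rho\to\infty$.

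\textbf{Main obstacle: the case structure of $F_{Z_b}$.} The limiting argument $\frac{A z_w}{4^{\bar R}B}$ can fall below $d^2$ or above $d^2+\frac{D^2}{4}$, depending on $z_w$ and on whether $A/B=e^{\mp2\alpha D}$.

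For the upper-bound pair, $A_1/B_1=e^{-2\alpha D}<1$. Here one has to check on which $t$-subranges the argument drops below $d^2$, where $F_{Z_b}=0$. For the lower-bound pair, the argument can exceed $d^2+\frac{D^2}{4}$, where $F_{Z_b}=1$.

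I would first handle this by keeping $F_{Z_b}$ as the piecewise function with clipping, which is already allowed by \eqref{lemma-1-cdf}. After that I would only remark on the consequence: the SOP bounds converge to nonzero constants, i.e.\ an error floor with zero diversity order.
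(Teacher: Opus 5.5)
Your proposal is correct and follows the paper's own argument. You let $\rho\to\infty$ in the threshold inside $F_{Z_b}$, which gives $\frac{A_{1,2}z_w}{4^{\bar R}B_{1,2}}$ (i.e.\ $\frac{Z_w e^{-2\alpha D}}{4^{\bar R}}$ and $\frac{Z_w}{4^{\bar R}e^{-2\alpha D}}$), and then reuse the same Chebyshev--Gauss substitutions; your convergence justification and your remark on where $F_{Z_b}$ clips to $0$ or $1$ are extra rigor the paper omits.

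One precision about Step 3. The finite-$\rho$ bounds decrease monotonically to their limits, so your monotonicity observation shows the asymptotic lower bound holds for every $\rho$. The asymptotic upper bound, however, does not follow from monotonicity; it follows only because weak inequalities are preserved when you pass to the limit.
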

\begin{proof}
When $\rho \to \infty$, we can obtain $\lim_{\rho \to \infty} \frac{\eta \rho e^{-2\alpha D}}{4^{\bar{R}} + \frac{4^{\bar{R}} \eta \rho }{Z_{w}}-1} \approx \frac{Z_{W}e^{-2\alpha D}} {4^{\bar{R}}}$ and  $\lim_{\rho \to \infty} \frac{\eta \rho}{4^{\bar{R}} + \frac{4^{\bar{R}} \eta \rho e^{-2\alpha D}}{Z_{w}}-1} \approx \frac{Z_{W}} {4^{\bar{R}}e^{-2\alpha D}}$, so $ \mathcal{J}_{1}^{\infty}$ can be derived by
\begin{align}
\mathcal{J}_{1}^{\infty} & = \int_{-1}^{1} \frac{D^2}{8} F_{Z_{b}}\left( \frac{Z_{W}e^{-2\alpha D}} {4^{\bar{R}}} \right)  \\&\times \left( \frac{\pi}{D^2}-\frac{2\sqrt{\frac{D^2}{8} t + \frac{D^2}{8}}}{D^3} \right) dt =\sum_{i=1}^{n}\omega_i j_1^{\infty}(t_i).
\end{align}
Similarly with $\mathcal{J}_{2}^{\infty}$, $\mathcal{K}_{1,2}^{\infty}$ and $\mathcal{L}_{1,2}^{\infty}$. 
The proof has been completed.
\end{proof}

\begin{corollary}\label{corollary-op}
In this system, the diversity order of Bob is given by $\mathcal{D} = 0$.
\end{corollary}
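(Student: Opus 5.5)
The diversity order is defined as $\mathcal{D} = -\lim_{\rho\to\infty} \frac{\log \mathbb{P}}{\log \rho}$. The plan is to show that the SOP tends to a strictly positive constant as $\rho \to \infty$, so its logarithm stays bounded and the ratio vanishes. I will work with the two bounds of Theorem~\ref{SOP Boundary theorem} in the limiting form given by Proposition~\ref{proposition-high-op}.

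First I squeeze $\mathbb{P}$ between constants independent of $\rho$. By Proposition~\ref{proposition-high-op}, the argument of $F_{Z_b}$ in each integrand converges to $Z_w A/(4^{\bar{R}} B)$. This is a bounded, $\rho$-free quantity. Hence $\mathcal{J}^{\infty}_{1,2}$, $\mathcal{K}^{\infty}_{1,2}$ and $\mathcal{L}^{\infty}_{1,2}$ are finite constants. Rather than rely on the quadrature approximations, I would argue directly on the exact event. Since the right-hand side below is monotonically increasing in $\rho$ for fixed $Z_w$, the probability of success under the lower secrecy rate satisfies
$$\Pr\!\left(Z_b \le \frac{\eta\rho e^{-2\alpha D}}{4^{\bar R}(1+\eta\rho/Z_w)-1}\right) \le \Pr\!\left(Z_b \le \frac{Z_w e^{-2\alpha D}}{4^{\bar R}}\right).$$
The same argument applies with $(A_2,B_2)$ for the other bound.

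Second, I show that the limiting SOP lower bound is strictly positive, that is, $1-(\mathcal{J}_2^{\infty}+\mathcal{K}_2^{\infty}+\mathcal{L}_2^{\infty})>0$. This is equivalent to showing $\Pr\big(Z_b > Z_w e^{2\alpha D}/4^{\bar R}\big)>0$. Since $Z_b$ and $Z_w$ are independent with continuous densities by Lemma~\ref{lemma-distance-pin-bob} and Lemma~\ref{lemma-distance-pin-willie}, it suffices to exhibit a region of positive measure. Take $Z_w$ near $d^2$ and $Z_b$ near $d^2+D^2/4$. The inequality $d^2+D^2/4 > d^2 e^{2\alpha D}/4^{\bar R}$ then needs to hold. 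Here $4^{\bar R}$ is close to $1$, so this requires $D^2/4 > d^2\big(e^{2\alpha D}4^{-\bar R}-1\big)$, which holds for practical small $\alpha$.

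Third, I conclude. If $\alpha$ is large, the condition above could fail and the lower bound could be zero. In that case I would use the exact SOP directly. The exact event has limiting success threshold $Z_w/4^{\bar R}$, since the common attenuation factor cancels in the ratio. The exact SOP then tends to $\Pr(Z_b > Z_w/4^{\bar R}) \ge \Pr(Z_b>Z_w)>0$, because the two supports overlap and the densities are positive there.

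In either case $\lim_{\rho\to\infty}\mathbb{P} = c > 0$. Therefore $\log\mathbb{P}\to\log c$ is finite and $\mathcal{D} = -\lim \log\mathbb{P}/\log\rho = 0$. The main obstacle is the positivity step in the second part. I handle it by using the exact expression, in which the attenuation cancels, rather than the possibly loose lower bound.
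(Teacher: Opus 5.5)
Your argument is correct, and its core matches the paper's: the SOP saturates at a positive level that does not depend on $\rho$, so $\log\mathbb{P}$ stays bounded and $\mathcal{D}=0$.

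The paper's proof only asserts that both bounds of the SOP theorem have zero slope, and then sandwiches $\mathbb{P}$ between them. It never checks that the limiting lower bound $\Pr\big(Z_b>Z_w e^{2\alpha D}/4^{\bar R}\big)$ is nonzero. Your second step makes explicit that this holds exactly when $d^2+D^2/4>d^2e^{2\alpha D}4^{-\bar R}$. That is true for the simulated parameters, but not for every $\alpha$.

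Your third step is a genuinely different and stronger route. In the exact SOP, the factor $e^{-2\alpha L}$ with $L=|\psi_0-\psi_1^{Pin}|$ multiplies both SNRs. For every realization of $L$, the success threshold increases in $\rho$ towards $Z_w/4^{\bar R}$. Hence $\mathbb{P}\ge\Pr(Z_b>Z_w/4^{\bar R})\ge\Pr(Z_b>Z_w)>0$ for all $\rho$ and all $\alpha$. Because this bound holds pointwise, it does not matter that $L$ and $Z_w$ both depend on $x_1$.

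That step alone proves the corollary. You do not need your first two steps. You do not need the upper-bound half of the sandwich either, since $\mathbb{P}\le1$ already gives $\mathcal{D}\ge0$. Nor do you need the existence of $\lim_{\rho\to\infty}\mathbb{P}$; a uniform positive lower bound suffices.

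The paper's route gives a one-line connection to the plotted closed-form bounds. Yours gives an unconditional, rigorous statement.
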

\begin{proof}
The diversity order is defined as $\mathcal{D} = -\lim_{\rho \to \infty} \frac{\log \mathbb{P}}{\log \rho}$.
Based on the derived bounds in Theorem~\ref{SOP Boundary  theorem}, we have $-\lim_{\rho \to \infty} \frac{\log \mathbb{P}_{LB}}{\log \rho} = -\lim_{\rho \to \infty} \frac{\log \mathbb{P}_{UB}}{\log \rho} = 0$.
Thus, we can obtain that $\mathcal{D} = 0$. This completes the proof.
\end{proof}

\subsection{Ergodic Secrecy Capacity}
\begin{theorem}
The ESC fort Bob is given by
\begin{equation}
\begin{split}
\mathbb{E}(R_{s}) \left\{\begin{matrix}
\le \mathcal{C}_3 - \mathcal{J}_3 - \mathcal{K}_3 - \mathcal{L}_3, \\
\ge \mathcal{C}_4 - \mathcal{J}_4 - \mathcal{K}_4 - \mathcal{L}_4,
\end{matrix}\right.
\end{split}
\end{equation}
where 
$\mathcal{C}_{3,4} = \sum_{i=1}^{n}w_ic_{3,4}(t_i)$,
$\mathcal{J}_{3,4} = \sum_{i=1}^{n}w_ij_{3,4}(t_i)$, 
$\mathcal{K}_{3,4} = \sum_{i=1}^{n}w_ik_{3,4}(t_i)$, 
$\mathcal{L}_{3,4} = \sum_{i=1}^{n}w_il_{3,4}(t_i)$. 
The terms $c_{3,4}(t)$, $j_{3,4}(t)$, $k_{3,4}(t)$, and $l_{3,4}(t)$ are unified by introducing the coefficient pairs $(A_{3,4}, B_{3,4})$. We define $(A_3, B_3) = (1, e^{-2\alpha D})$ for the upper bound and $(A_4, B_4) = (e^{-2\alpha D}, 1)$ for the lower bound. Accordingly, the terms are given as:
$c_{3,4}(t) = \frac{D^2}{8}\log_2(1+\frac{\eta \rho A_{3,4}}{\frac{D^2}{8} t + \frac{D^2}{8} + d^2}) \frac{1}{D\sqrt{\frac{D^2}{8} t + \frac{D^2}{8}}} \sqrt{1-t^2}$, 
$j_{3,4}(t)=\frac{D^2}{8} \log_2(1+\frac{\eta \rho B_{3,4}}{\frac{D^2}{8}t + \frac{D^2}{8} + d^2})(\frac{\pi}{D^2}-\frac{2\sqrt{\frac{D^2}{8}t+\frac{D^2}{8}}}{D^3})\sqrt{1-t^2}$, 
$k_{3,4}(t)=\frac{3D^2}{8} \log_2(1+\frac{\eta\rho B_{3,4}}{\frac{3D^2}{8} t+\frac{5D^2}{8}+d^2})(\frac{2}{D^2}\arcsin(\frac{D}{2\sqrt{\frac{3D^2}{8} t+\frac{5D^2}{8}}})-\frac{1}{D^2})\sqrt{1-t^2}$, and 
$l_{3,4}(t)=\frac{D^2}{8}\log_2(1+\frac{\eta\rho B_{3,4}}{\frac{D^2}{8} t+\frac{9D^2}{8}+d^2})\epsilon\sqrt{1-t^2}$.
\end{theorem}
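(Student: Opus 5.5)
We follow the same bounding strategy as in the proof of Theorem~\ref{SOP Boundary theorem}. Since $0\le|\psi_0-\psi_1^{Pin}|\le D$, the attenuation factor satisfies $e^{-2\alpha D}\le e^{-2\alpha|\psi_0-\psi_1^{Pin}|}\le 1$. Both $R_b$ and $R_w$ are increasing in this factor. Hence $R_s\le R_b^{\mathrm{upper}}-R_w^{\mathrm{lower}}$ and $R_s\ge R_b^{\mathrm{lower}}-R_w^{\mathrm{upper}}$, pointwise for every realization of $(\psi_b,\psi_w)$. Taking expectations preserves these inequalities, and linearity splits each bound into two one-dimensional expectations:
\begin{equation*}
\mathbb{E}(R_s)\le \mathbb{E}\left[\tfrac12\log_2\left(1+\tfrac{\eta\rho A_3}{Z_b}\right)\right]-\mathbb{E}\left[\tfrac12\log_2\left(1+\tfrac{\eta\rho B_3}{Z_w}\right)\right],
\end{equation*}
with $(A_3,B_3)=(1,e^{-2\alpha D})$, where $\rho=P_1/\sigma^2$ (taking $\sigma_b=\sigma_w$ as implicitly done in Theorem~\ref{SOP Boundary theorem}). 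The lower bound is analogous with $(A_4,B_4)=(e^{-2\alpha D},1)$. A useful feature of this decomposition is that we never need the joint law of $(Z_b,Z_w)$: only their marginals from Lemma~\ref{lemma-distance-pin-bob} and Lemma~\ref{lemma-distance-pin-willie} enter. The factor $\tfrac12$ is absorbed into the normalization of the quadrature terms, to be checked against the stated $c,j,k,l$.

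For Bob's term, we write $\mathbb{E}[\cdot]=\int_{d^2}^{d^2+D^2/4}\log_2(1+\eta\rho A/z)\,\frac{1}{D\sqrt{z-d^2}}\,dz$. We map $[d^2,d^2+D^2/4]$ to $[-1,1]$ via $z=\frac{D^2}{8}t+\frac{D^2}{8}+d^2$, which gives the Jacobian $\frac{D^2}{8}$. We then multiply and divide by $\sqrt{1-t^2}$ so that the integral has the Chebyshev–Gauss form $\int_{-1}^1 \frac{g(t)}{\sqrt{1-t^2}}dt\approx\sum_i\omega_i g(t_i)$. This yields $c_{3,4}(t)$.

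For Willie's term, we split the support of $f_{Z_w}$ into the pieces given in Lemma~\ref{lemma-distance-pin-willie}:
\begin{itemize}
\item $[d^2,d^2+D^2/4]$, mapped with the same substitution, gives $j_{3,4}$.
\item $[d^2+D^2/4,d^2+D^2]$, mapped via $z=\frac{3D^2}{8}t+\frac{5D^2}{8}+d^2$ with Jacobian $\frac{3D^2}{8}$ and density $\xi$, gives $k_{3,4}$.
\item $[d^2+D^2,d^2+5D^2/4]$, mapped via $z=\frac{D^2}{8}t+\frac{9D^2}{8}+d^2$ with density $\epsilon$, gives $l_{3,4}$.
\end{itemize}
Each piece is again written in Chebyshev–Gauss form by inserting $\sqrt{1-t^2}$.

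The main obstacle is bookkeeping rather than conceptual. Bob's density has an integrable singularity $1/\sqrt{z-d^2}$ at $t=-1$. We must confirm that the weight $\sqrt{1-t^2}$ absorbs it so that $g$ stays bounded. For this, note that $\sqrt{z-d^2}=\frac{D}{2\sqrt2}\sqrt{1+t}$, and the product with $\sqrt{1-t^2}$ gives $\sqrt{1-t}$ up to constants, which is fine. We also need to keep the bound directions straight: the upper bound uses the largest attenuation factor for Bob and the smallest for Willie, and the lower bound the reverse.
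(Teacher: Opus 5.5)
Your proposal is correct and follows essentially the same route as the paper. Both arguments sandwich the attenuation factor between $e^{-2\alpha D}$ and $1$, then split $\mathbb{E}(R_s^{\mathrm{upper}})=\mathbb{E}(R_b^{\mathrm{upper}})-\mathbb{E}(R_w^{\mathrm{lower}})$ (and symmetrically for the lower bound) by linearity, so only the marginals of $Z_b$ and $Z_w$ are needed, and evaluate each piece by mapping its support to $[-1,1]$ and applying Chebyshev--Gauss quadrature.

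One small correction: the factor $\tfrac12$ is not absorbed anywhere. The stated $c_{3,4},j_{3,4},k_{3,4},l_{3,4}$ carry only the Jacobians, so for instance $\mathcal{C}_3$ approximates $\mathbb{E}[\log_2(1+\eta\rho/Z_b)]=2\,\mathbb{E}(R_b^{\mathrm{upper}})$. The paper's own proof drops the $\tfrac12$ silently too, so the theorem should be read with rates defined without the $\tfrac12$ prefactor, or with its right-hand sides halved.
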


\begin{proof}
Based on $0 \le \left| \psi_{0} -  \psi _{1}^{Pin} \right| \le D$, we can obtain that the upper bound and lower bound of $e^{-2 \alpha \left| \psi_{0} -  \psi _{1}^{Pin} \right|}$ are $1$ and $e^{-2 \alpha D}$, respectively.

According to \eqref{eq-bob-upper}, \eqref{eq-bob-lower}, \eqref{eq-willie-upper} and \eqref{eq-willie-lower}, the upper and lower bounds of secrecy capacity are $R_{s}^{upper} = R_{b}^{upper}-R_{w}^{lower}$ and $R_{s}^{lower}=R_{b}^{lower}-R_{w}^{upper}$, respectively.

Next, the upper and lower bounds of ESC are given by $\mathbb{E}(R_{s}^{upper}) = \mathbb{E}(R_{b}^{upper}) - \mathbb{E}(R_{w}^{lower})$ and $\mathbb{E}(R_{s}^{lower}) = \mathbb{E}(R_{b}^{lower}) - \mathbb{E}(R_{w}^{upper})$, respectively.

To perform Chebyshev–Gauss quadrature for each integration, we need a variable substitution to change the integration to $[-1,1]$, i.e. 
Let $z = \frac{D^2}{8} t + \frac{D^2}{8} + d^2$, and we have

\begin{equation}
\begin{split}
\mathbb{E}(R_{b}^{upper})
&= \int_{-1}^{1} \frac{D^2}{8} \log_2\left(1 + \frac{\eta \rho}{\frac{D^2}{8} t + \frac{D^2}{8} + d^2}\right)f_{Z_{b}}(t)\, dt \\
&\approx \sum_{i=1}^{n}w_{i}j_3(t_i),
\end{split}
\end{equation}
Similarly, $\mathbb{E}(R_{w}^{lower})$, $\mathbb{E}(R_{b}^{lower})$, and $\mathbb{E}(R_{w}^{upper})$ can be solved in the same way.
The proof is complete.
\end{proof}

To further characterize system performance, we define the high-SNR slope as $\mathcal{S}=\lim_{\rho \to \infty} \frac{R(\rho)}{\log_2(\rho)}$.
Its evaluation relies on the asymptotic expression for Bob's ESC, which is presented in the following corollary.
\begin{corollary}
In the high-SNR regime, the approximation of $\mathbb{E}(R_s)$ is given by
\begin{equation}
\begin{split}
\mathbb{E}(R_s)^{\infty} \left\{\begin{matrix}
\le \mathcal{T}_1 - \mathcal{T}_2- \log_2(e^{-2 \alpha D}), \\
\ge \mathcal{T}_1 - \mathcal{T}_2 +\log_2(e^{-2 \alpha D}),
\end{matrix}\right.
\end{split}
\end{equation}
where $\mathcal{T}_1 = \sum_{i=1}^n w_i j_5(t_i) + \sum_{i=1}^n w_i k_5(t_i) + \sum_{i=1}^n w_i l_5(t_i)$,
$\mathcal{T}_2 = \sum_{i=1}^n w_i c_5(t_i)$,
$c_5(t) = \frac{D^2}{8}\log_2(\frac{D^2}{8} t + \frac{D^2}{8} + d^2) \frac{1}{D\sqrt{\frac{D^2}{8} t + \frac{D^2}{8}}} \sqrt{1-t^2}$, 
$j_5(t)=\frac{D^2}{8} \log_2(\frac{D^2}{8}t + \frac{D^2}{8} + d^2)(\frac{\pi}{D^2}-\frac{2\sqrt{\frac{D^2}{8}t+\frac{D^2}{8}}}{D^3})\sqrt{1-t^2}$,
$k_5(t)=\frac{3D^2}{8} \log_2(\frac{3D^2}{8} t+\frac{5D^2}{8}+d^2)(\frac{2}{D^2}\arcsin(\frac{D}{2\sqrt{\frac{3D^2}{8} t+\frac{5D^2}{8}}})-\frac{1}{D^2})\sqrt{1-t^2}$, and
$l_5(t)=\frac{D^2}{8}\log_2(\frac{D^2}{8} t+\frac{9D^2}{8}+d^2)\epsilon\sqrt{1-t^2}$.
\end{corollary}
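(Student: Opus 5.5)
Starting from the bounds in the preceding ESC theorem, I would use the decomposition $\mathbb{E}(R_s^{upper})=\mathbb{E}(R_b^{upper})-\mathbb{E}(R_w^{lower})$ and $\mathbb{E}(R_s^{lower})=\mathbb{E}(R_b^{lower})-\mathbb{E}(R_w^{upper})$. Each expectation has the form $\mathbb{E}[\log_2(1+\eta\rho c/Z)]$ with $c\in\{1,e^{-2\alpha D}\}$ and $Z\in\{Z_b,Z_w\}$. I would then let $\rho\to\infty$ inside each integral. Since $Z\ge d^2>0$ almost surely by Lemma~\ref{lemma-distance-pin-bob} and Lemma~\ref{lemma-distance-pin-willie}, we have $\log_2(1+\eta\rho c/Z)=\log_2(\eta\rho)+\log_2 c-\log_2 Z+o(1)$ uniformly in $Z$, because $Z$ lives on a compact interval bounded away from zero. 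Therefore $\mathbb{E}[\log_2(1+\eta\rho c/Z)]=\log_2(\eta\rho)+\log_2 c-\mathbb{E}[\log_2 Z]+o(1)$.

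Subtracting the Bob and Willie terms, the $\log_2(\eta\rho)$ contributions cancel exactly. This leaves
\[
\mathbb{E}(R_s^{upper})\to \mathbb{E}[\log_2 Z_w]-\mathbb{E}[\log_2 Z_b]-\log_2(e^{-2\alpha D}),
\]
since the Bob term carries $c=1$ and the Willie term carries $c=e^{-2\alpha D}$. The lower bound is symmetric, with $+\log_2(e^{-2\alpha D})$. Any factor $\tfrac12$ from the rate definition would be handled consistently with the previous theorem's convention, which already dropped it.

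Next I would evaluate $\mathbb{E}[\log_2 Z_b]$ and $\mathbb{E}[\log_2 Z_w]$ exactly as in the previous theorem. For $Z_b$, I would integrate $\log_2 z\, f_{Z_b}(z)$ over $[d^2,d^2+D^2/4]$ with the substitution $z=\tfrac{D^2}{8}t+\tfrac{D^2}{8}+d^2$ and Chebyshev--Gauss weights $w_i=\pi/n$, inserting $\sqrt{1-t^2}$. This gives $\mathcal{T}_2=\sum_i w_i c_5(t_i)$. For $Z_w$, I would split the support according to the piecewise PDF of Lemma~\ref{lemma-distance-pin-willie}:
\begin{itemize}
\item on $[d^2,d^2+D^2/4]$, the substitution $z=\tfrac{D^2}{8}t+\tfrac{D^2}{8}+d^2$ gives $j_5$;
\item on $[d^2+D^2/4,d^2+D^2]$, the substitution $z=\tfrac{3D^2}{8}t+\tfrac{5D^2}{8}+d^2$ gives $k_5$;
\item on $[d^2+D^2,d^2+5D^2/4]$, the substitution $z=\tfrac{D^2}{8}t+\tfrac{9D^2}{8}+d^2$ gives $l_5$.
\end{itemize}
Together these give $\mathcal{T}_1$, so the stated form follows.

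The main obstacle is justifying the interchange of the limit and the expectation. I would handle it by dominated convergence: on the compact support, $|\log_2(1+\eta\rho c/Z)-\log_2(\eta\rho c/Z)|=\log_2(1+Z/(\eta\rho c))\le \log_2(1+(d^2+5D^2/4)/(\eta\rho c))\to0$ uniformly. A minor second point is that the Chebyshev--Gauss weighting of $f_{Z_b}$ has an integrable singularity at $t=-1$. As in the preceding theorem, I would treat the quadrature as an approximation and not attempt an exactness claim.
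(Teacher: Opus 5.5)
Your proposal is correct and follows essentially the same route as the paper: let $\rho\to\infty$ so that the $\log_2(\eta\rho)$ terms cancel, leaving $\mathbb{E}[\log_2 Z_w]-\mathbb{E}[\log_2 Z_b]\mp\log_2(e^{-2\alpha D})$, then evaluate each log-moment piecewise by Chebyshev--Gauss quadrature with the same three substitutions. The only difference is that you take the limit term by term and justify the interchange via the uniform bound $\log_2(1+Z/(\eta\rho c))\le\log_2(1+(d^2+5D^2/4)/(\eta\rho c))$, whereas the paper passes to the limit inside the log-ratio of a double integral without giving a justification.
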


\begin{proof}
When $\rho \to \infty$, we can obtain $\lim_{\rho \to \infty} \log_2\left(\frac{1+ \frac{\eta \rho}{x}
}{1+\frac{\eta \rho e^{-2\alpha D}}{y}}\right) \approx \log_2\left(\frac{y}{x}\right) - \log_2(e^{-2\alpha D})$ and  $\lim_{\rho \to \infty} \log_2\left(\frac{1+ \frac{\eta \rho e^{-2\alpha D}}{x}
}{1+\frac{\eta \rho }{y}}\right) \approx \log_2\left(\frac{y}{x}\right) + \log_2(e^{-2\alpha D})$, $\mathbb{E}(R_{s}^{upper})^{\infty}$ can be derived as
\begin{equation}
\begin{split}
\mathbb{E}(R_{s}^{upper})^{\infty} &= \int_{0}^{\infty}\int_{0}^{\infty} \log_2 \left(\frac{y}{x}\right) f_{Z_{b}}(x) f_{Z_{w}}(y) dxdy  \\ &- \log_2(e^{-2 \alpha D}) \\
& = \underbrace{\int_{0}^{\infty} \log_2 \left(y\right)f_{Z_{w}}(y)dy}_{\mathcal{T}_1} \\
&- \underbrace{\int_{0}^{\infty}\log_2 \left(x\right) f_{Z_{b}}(x)dx}_{\mathcal{T}_2} - \log_2(e^{-2 \alpha D}),
\end{split}
\end{equation}
Next, $\mathcal{T}_2$ can be estimated through the Chebyshev–Gauss quadrature approach, and then we have
\begin{equation}
\begin{split}
    \mathcal{T}_2 &= \int_{-1}^{1} \frac{D^2}{8} \log_2\left(\frac{D^2}{8} t + \frac{D^2}{8} + d^2\right)f_{Z_{b}}(t) dt \\
    & \approx \sum_{i = 1}^{n} w_i c_5(t_i),
\end{split}
\end{equation}
Similarly, $\mathcal{T}_1$ and $\mathbb{E}(R_{s}^{lower})^{\infty}$ can be solved in the same way.
This completes the proof.
\end{proof}

\begin{corollary}\label{corollary-er}
In this system, the high-SNR slope of Bob is given by $\mathcal{S} = 0$.
\end{corollary}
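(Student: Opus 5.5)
The plan is to use the preceding corollary: in the high-SNR regime the ergodic secrecy capacity is sandwiched between two constants that do not depend on $\rho$. Dividing these bounds by $\log_2\rho$ and letting $\rho\to\infty$ should then force the slope to zero by a squeeze argument.

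\textbf{Step 1: $\rho$-free upper bound.} I would start from the exact upper bound $\mathbb{E}(R_s)\le \mathbb{E}(R_b^{upper})-\mathbb{E}(R_w^{lower})$ of the ESC theorem, and write the integrand pointwise as
\begin{equation}
\log_2\frac{1+\eta\rho/Z_b}{1+\eta\rho e^{-2\alpha D}/Z_w}.
\end{equation}
Because $1+a\rho\le (a/b)(1+b\rho)$ whenever $a\ge b$, this quantity is at most $\log_2\!\big(\max\{1, Z_w/(Z_b e^{-2\alpha D})\}\big)$ for every $\rho$. By Lemma~\ref{lemma-distance-pin-bob} and Lemma~\ref{lemma-distance-pin-willie}, $Z_b\ge d^2>0$ and $Z_w\le d^2+\frac{5D^2}{4}$. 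Hence the integrand is bounded by the constant
\begin{equation}
\log_2\!\Big(\tfrac{d^2+5D^2/4}{d^2}\Big)+2\alpha D\log_2 e,
\end{equation}
which is independent of $\rho$.

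\textbf{Step 2: $\rho$-free lower bound.} Since the ESC is taken as nonnegative, I would use $\mathbb{E}(R_s)\ge 0$ directly. Alternatively, the lower bound $\mathcal{T}_1-\mathcal{T}_2+\log_2(e^{-2\alpha D})$ from the preceding corollary is also a finite constant, because the supports of $Z_b$ and $Z_w$ are compact and bounded away from $0$.

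\textbf{Step 3: squeeze.} The ESC is thus trapped between two finite, $\rho$-independent constants. Dividing by $\log_2\rho$ and letting $\rho\to\infty$, both bounds divided by $\log_2\rho$ tend to $0$, so $\mathcal{S}=0$.

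\textbf{Main obstacle.} The main step is justifying that the bound holds uniformly in $\rho$ rather than merely asymptotically, since the corollary's ``$\approx$'' is only heuristic. The elementary pointwise inequality in Step 1, together with the bounded supports from Lemmas~\ref{lemma-distance-pin-bob}--\ref{lemma-distance-pin-willie}, removes the need for any interchange of limit and integral; dominated convergence is available as a backup.
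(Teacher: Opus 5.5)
Your argument is correct, and it reaches the conclusion by a somewhat different path than the paper. The paper's proof builds directly on the preceding corollary. The asymptotic bounds $\mathcal{T}_1-\mathcal{T}_2\mp\log_2(e^{-2\alpha D})$ contain no $\rho$, so the paper writes $\mathcal{S}=dR^{\infty}/d\log_2(\rho)$ and reads off zero for both bounds.

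You instead bound the exact ESC for every $\rho$. The inequality $1+a\rho\le (a/b)(1+b\rho)$ for $a\ge b$, together with $Z_b\ge d^2$ and $Z_w\le d^2+\frac{5D^2}{4}$, gives a $\rho$-free ceiling. You then squeeze $\mathbb{E}(R_s)/\log_2\rho$ against the limit definition $\mathcal{S}=\lim_{\rho\to\infty}R(\rho)/\log_2(\rho)$, which is the definition the paper actually states.

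The paper's route is shorter and reuses its asymptotic machinery. However, it inherits the ``$\approx$'' of the previous corollary, and it works with a derivative of an approximation rather than the ratio limit it defines. Yours is self-contained, avoids both the quadrature and any interchange of limit and integral, and yields explicit saturation constants.

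One small point in your Step~2: in the paper, $R_s=R_b-R_w$ carries no $[\cdot]^+$, so nonnegativity of $\mathbb{E}(R_s)$ does not follow from the definition. It does hold in this model. Conditioned on $x_1,x_2$, $Z_w=(x_2-x_1)^2+y_2^2+d^2\ge y_2^2+d^2$, which has the law of $Z_b$, and the attenuation factor is common to both links.

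The cleanest fix, though, is the mirror image of your Step~1: $\frac{1+a\rho}{1+b\rho}\ge\min\{1,a/b\}$. This bounds the integrand of $\mathbb{E}(R_s^{lower})$ below by $\log_2\frac{d^2e^{-2\alpha D}}{d^2+D^2/4}$, uniformly in $\rho$. Your alternative via $\mathcal{T}_1-\mathcal{T}_2+\log_2(e^{-2\alpha D})$ is a limit rather than a bound valid at every $\rho$, so it needs the dominated-convergence step you mention.
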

\begin{proof}
In this system, we have $\mathcal{S} = \frac{dR^{\infty}}{d\log_2(\rho)}$.
According to $\frac{dR^{\infty,upper}}{d\log_2(\rho)} = \frac{dR^{\infty,lower}}{d\log_2(\rho)} = 0$, we can obtain that $\mathcal{S} = 0$.
This completes the proof.
\end{proof}

\section{Numerical Results}
In this section, numerical results are provided to evaluate the performance of the proposed attenuation model. 
Meanwhile, Monte-Carlo simulations are used to validate the accuracy of the derived analytical expressions. 
Table~\ref{tab:parameters} shows specific parameter settings.
\begin{table}[htbp]
\centering
\caption{Parameters Settings}
\begin{tabular}{|c|c|}
\hline
Side length of the region & $D =25$ m \\
\hline
Height of the waveguide & $d = 3$ m \\
\hline
Carrier frequency & $f_c = 10$ GHz \\
\hline
Waveguide attenuation coefficient & $\alpha = 0.01$ \\
\hline
Number of points for Chebyshev-Gauss quadratures & $N = 1000$ \\
\hline
Bandwidth & $  B =1 $ MHz  \\
\hline
Target rates & $R=10 $ kbps \\
\hline
Monte Carlo simulations & $5 \times 10^4$ \\
\hline
\end{tabular}
\label{tab:parameters}
\end{table}

\subsection{Secrecy Outage Probability Analysis}
The SOPs versus the transmit SNR are illustrated in Fig.~\ref{fig:op_low} and Fig.~\ref{fig:op_high}.
As observed, the Monte Carlo simulation results are strictly bracketed by the derived analytical upper and lower bounds across the entire SNR range. 
This consistency validates the accuracy of the proposed mathematical framework and the derived analytical expressions.

\begin{figure*}[t]
    \centering

    \begin{minipage}[t]{0.24\textwidth}
        \centering
        \includegraphics[width=\linewidth]{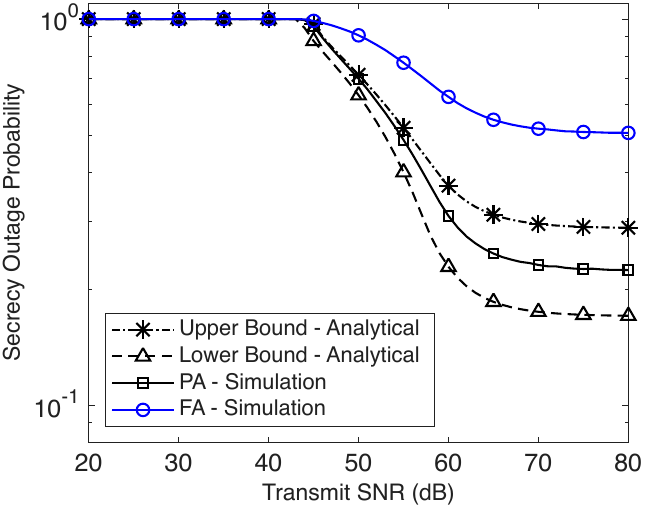}
        \captionof{figure}{SOPs of the PA and FA in the low-SNR regime.}
        \label{fig:op_low}
    \end{minipage}
    \hfill
    \begin{minipage}[t]{0.24\textwidth}
        \centering
        \includegraphics[width=\linewidth]{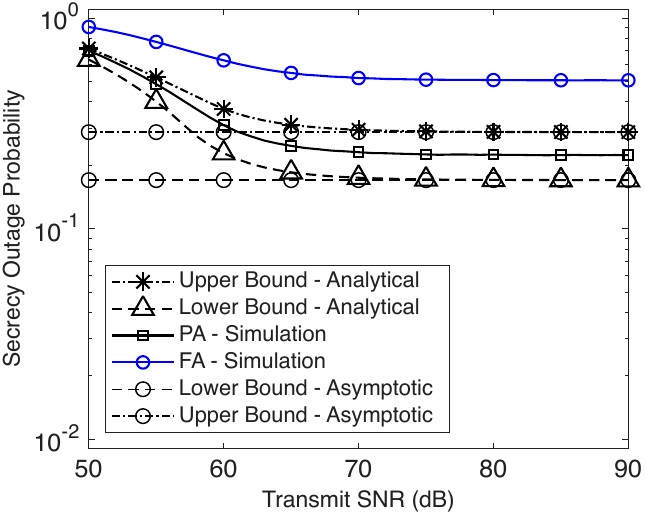}
        \captionof{figure}{SOPs of the PA and FA in the high-SNR regime.}
        \label{fig:op_high}
    \end{minipage}
    \hfill
    \begin{minipage}[t]{0.24\textwidth}
        \centering
        \includegraphics[width=\linewidth]{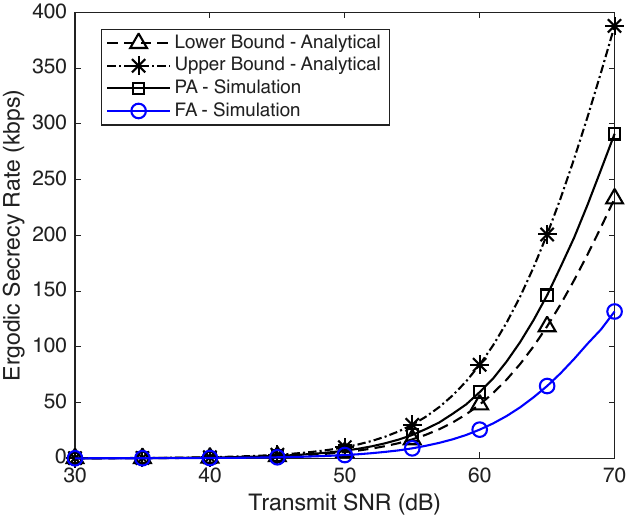}
        \captionof{figure}{ESCs of the PA and FA in the low-SNR regime.}
        \label{fig:er_low}
    \end{minipage}
    \hfill
    \begin{minipage}[t]{0.24\textwidth}
        \centering
        \includegraphics[width=\linewidth]{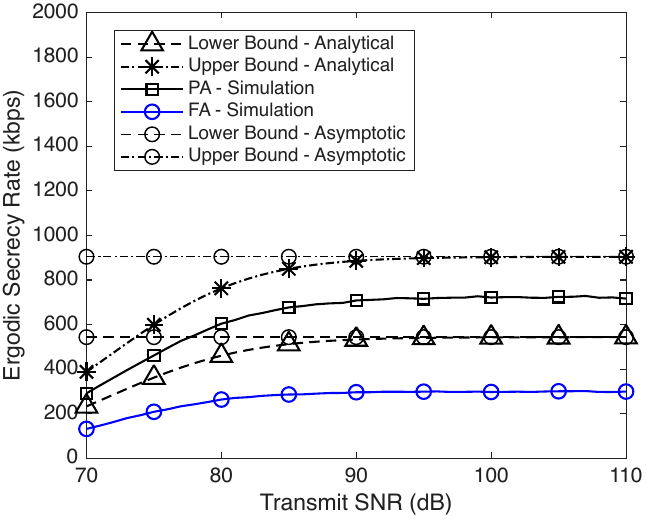}
        \captionof{figure}{ESCs of the PA and PA in the high-SNR regime.}
        \label{fig:er_high}
    \end{minipage}

\end{figure*}
Furthermore, the analytical bounds converge to the derived asymptotic limits in the high-SNR regime, substantiating the validity of the asymptotic analysis. 
This saturation behavior confirms that the system achieves a diversity order of $\mathcal{D}=0$, which is consistent with the theoretical results in Corollary~\ref{corollary-op}.
For performance comparison, we consider a conventional FA system as the benchmark, which is located at the signal source $[0,0,d]$ and is unaffected by in-waveguide attenuation.
It is observed that the PA has a consistently lower SOP than the FA across the evaluated SNR range, 
which reveals that the PA has higher reliability for secure communication even when accounting for in-waveguide attenuation.
Specifically, the spatial flexibility of the PA allows
an adjustable radiating point along the waveguide to find more favorable channel conditions, 
effectively mitigating the impact of large-scale path loss.

\subsection{Ergodic Secrecy Capacity Analysis}
In Fig.~\ref{fig:er_low} and Fig.~\ref{fig:er_high}, the ESCs versus the transmit SNR with PA and FA are plotted.
Initially, the simulation points are seen to align well with the analytical results, being strictly bracketed by the analytical lower and upper bounds derived from the theoretical analysis.
It is observed that the analytical upper bound provides a particularly tight approximation to the simulation results across the entire SNR range.
Furthermore, the validity of the asymptotic analysis is substantiated as ESC curves converge to the asymptotic lower and upper bounds in the high-SNR regime.
The analysis reveals that the ESC attains a slope of $0$ in the high-SNR regime, which is consistent with Corollary~\ref{corollary-er}.
Additionally, the simulation curves confirm that the PA system consistently maintains a higher ESC compared to the traditional FA baseline,
indicating a superior capacity for secure data transmission against eavesdropping.

\section{Conclusions}
In this paper, we have investigated the PLS performance of the PA system under practical in-waveguide attenuation. 
Our asymptotic analysis has revealed that the system exhibits a zero diversity order and a high-SNR slope of zero, 
confirming that the secrecy performance saturates in the high-SNR regime. 
Numerical results have validated the tightness of the derived bounds and demonstrated that the secrecy reliability is fundamentally geometry-limited due to the coupling between waveguide attenuation and activation uncertainty. 
Furthermore, the derived closed-form expressions for the SOP and ESC could facilitate the future PA systems evaluation. 
By eliminating the high computational costs of exhaustive simulations, these analytical results facilitate antenna selection in multi-PA systems and the optimization of deployment strategies to enhance secrecy performance in attenuation-aware environments.

\bibliographystyle{IEEEtran}
\bibliography{reference}
\end{document}